\providecommand{\VersionLength}{long}
\newcommand{\ver}{\ifthenelse{\equal{\VersionLength}{long}}}
\newcommand{\nver}{\ifthenelse{\equal{\VersionLength}{short}}}
\theoremstyle{plain}
\theoremstyle{plain}
\newtheorem{thm}{Theorem}
\theoremstyle{definition}
\newtheorem{defn}{Definition}
\theoremstyle{remark}
\newtheorem*{scheme*}{Scheme}
\newtheorem*{protocol*}{Protocol}
\providecommand{\thmref}[1]{Theorem~\ref{#1}}
\providecommand{\defnref}[1]{Definition~\ref{#1}}
\providecommand{\secref}[1]{Section~\ref{#1}}
\providecommand{\figref}[1]{Fig.~\ref{#1}}
\newcommand{\ie}{i.e.}
\newcommand{\eg}{e.g.}
\newcommand{\ints}{\mathbb{Z}}
\newcommand{\nats}{\mathbb{N}}
\newcommand{\bm}[1]{\mbox{\boldmath{$#1$}}}
\newcommand{\e}{\text{e}}
\newcommand{\mA}{\mathcal{Z}}
\newcommand{\Comment}[1]{}
\newcommand{\old}[1]{}
\newcommand{\rem}[1]{}
\newcommand{\bx}{{\bm x}}
\newcommand{\bv}{{\bm v}}
\newcommand{\bz}{{\bm z}}
\providecommand{\bG}{{\bf G}}
\providecommand{\bK}{{\bf K}}
\providecommand{\e}{{\rm e}}
\providecommand{\comment}[1]{}
\newcommand{\beqn}[1]{\begin{eqnarray}\label{#1}}
\newcommand{\eeqn}{\end{eqnarray}}
\newcommand{\beq}[1]{\begin{equation}\label{#1}}
\newcommand{\eeq}{\end{equation}}
\providecommand{\eps}{\epsilon}
\newcommand{\vast}{\bBigg@{4}}
\newcommand{\Vast}{\bBigg@{4.9}}
\providecommand{\CC}{convolutional code}
\providecommand{\CCs}{convolutional codes}
\providecommand{\conlen}{d}
\providecommand{\kcc}{k}
\providecommand{\ncc}{n}
\providecommand{\framelen}{N}
\providecommand{\Pe}{\bar{P}_e}
\providecommand{\bias}{B}
\providecommand{\bb}{{\bm b}}
\providecommand{\bc}{{\bm c}}
\providecommand{\bu}{{\bm u}}
\providecommand{\bv}{{\bm v}}
\providecommand{\bw}{{\bm w}}
\providecommand{\tbc}{\tilde{\bc}}
\providecommand{\tbb}{\tilde{\bb}}
\providecommand{\hbb}{\hat{\bb}}
\renewcommand{\VersionLength}{long}
\begin{document}
%\pagestyle{plain}
%%%%%%%%%%%%%%%%%%%%%%%%%%%%%%%%%%%%%%%%%%%%%%%%%%%%%%%%%%%%%%%%%%%%%%%%%%%%%%%%%%%%%%%%%%%%%%%%%%%%%%%%%%%%%%%%%%%%%
\title{(Almost) Practical Tree Codes}

\author{Anatoly Khina, Wael Halbawi and Babak Hassibi 
    \\	Department of Electrical Engineering, California Institute of Technology, Pasadena, CA 91125, USA \\
	{\em \{khina, whalbawi, hassibi\}@caltech.edu }
	\thanks{This work was supported in part by the National Science Foundation under grants CNS-0932428, CCF-1018927, CCF-1423663 and CCF-1409204, by a grant from Qualcomm Inc., by NASA's Jet Propulsion Laboratory through the President and Director’s Fund, and by King Abdullah University of Science and Technology.}
}

% make the title area
\maketitle

%%%%%%%%%%%%%%%%%%%%%%%%%%%%%%%%%%%%%%%%%%%%%%%%%%%%%%%%%%%%%%%%%%%%%%%%%%%%%%%%%%%%%%%%%%%%%%%%%%%%%%%%%%%%%%%%%%%%%%%%%

\begin{abstract}
    We consider the problem of stabilizing an unstable plant driven by bounded noise over a digital noisy communication link, 
a scenario at the heart of 
networked control. 
To stabilize such a plant, 
one needs real-time encoding and decoding with an error probability profile that decays exponentially with the decoding delay. 
The works of Schulman and Sahai over the past two decades have developed the notions of \emph{tree codes} and \emph{anytime capacity},
and provided the theoretical framework for studying such problems. 
Nonetheless, there has been little practical progress in this area due to 
the absence of explicit constructions of tree codes with efficient encoding and decoding algorithms. 
Recently, linear time-invariant tree codes were proposed to achieve the desired result under maximum-likelihood decoding. 
In this work, we take one more step towards practicality, 
by showing that these codes can be efficiently decoded using sequential decoding algorithms, 
up to some loss in performance (and with some practical complexity caveats). 
We supplement our theoretical results with numerical simulations that demonstrate the effectiveness of the decoder in a control system setting.
\end{abstract}

\begin{keywords}
    Tree codes, anytime-reliable codes, linear codes, convolutional codes, sequential decoding, networked control.
\end{keywords}

\allowdisplaybreaks
%%%%%%%%%%%%%%%%%%%%%%%%%%%%%%%%%%%%%%%%%%%%%%%%%%%%%%%%%%%%%%%%%%%%%%%%%%%%%%%%%%%%%%%%%%%%%%%%%%%%%%%%%%%%%%%%%%%%%%%%%

\section{Introduction}
\label{s:intro}

Control theory deals with stabilizing and regulating the behavior of a dynamical system (``plant'') via real-time causal feedback.
Traditional control theory was mainly concerned and used in well-crafted closed engineering systems, 
which are characterized by the measurement and control modules being co-located. 
The theory and practice for this setup are now well established; see, \eg, \cite{Babook:Indefinite}.

Nevertheless, in the current technological era of ubiquitous wireless connectivity, the demand for control over wireless media is ever growing. This \emph{networked control} setup presents more challenges due to its distributed nature: 
The plant output and the controller are no longer co-located and are separated by an unreliable link (see \figref{fig:noisy_feedback_link}).

To stabilize an unstable plant using the unreliable feedback link, an error-correcting code needs to be employed over the latter.
In one-way communications~--- the cornerstone of information theory~--- 
all the source data are assumed to be known in advance (\emph{non-causally}) and are recovered only when the reception ends.
In contrast, in coding for control, the source data are known only causally, as the new data at each time instant are dependent upon 
the dynamical random process. 
Moreover, the controller cannot wait until a large block is received; it needs to constantly produce estimates of the system's state, 
such that the fidelity of earlier data improves as time advances. Both of these goals are achieved via causal coding, which receives the data sequentially in a causal fashion and encodes it in a way such that the error probability of recovering the source data at a fixed time instant improves constantly with the reception of more code symbols.
\\%
\indent Sahai and Mitter \cite{SahaiMitterPartI} provided necessary and sufficient 
% \\
conditions on the required communication reliability 
over the unreliable feedback link to the controller. 
To that end, they defined the notion of \emph{anytime capacity} as the appropriate figure of merit for this setting, 
which is essentially the maximal possible rate of a causal code that at \emph{any} time $t$
recovers a source bit at time $(t - d)$ with error probability that decays 
exponentially with $d$, for \emph{all} $d$.
They further recognized that such codes have a natural \emph{tree code} structure, 
which is similar to the codes developed by Schulman for the related problem of interactive computation~\cite{TreeCodes}.

Unfortunately, the result by Schulman (and consequently also the ones by Sahai and Mitter) only proves the existence of a tree code with the desired properties and does not guarantee that a random tree code would be good with high probability. 
The main difficulty comes from the fact that proving that the \emph{random ensemble} achieves the desired exponential decay does not guarantee that the \emph{same code} achieves this for \emph{every time instant} and \emph{every delay}.

Sukhavasi and Hassibi \cite{SukhavasiHassibi} circumvented this problem by introducing linear time-invariant (LTI) tree codes. 
The time-invariance property means that the behavior of the code at every time instant is the same, 
which suggests, in  turn, that the performance guarantees for a \emph{random (time-invariant) ensemble} are easily translated 
to similar guarantees for a \emph{specific code} chosen at random, \emph{with high probability}.

However, this result assumes maximum likelihood (ML) decoding, which is impractical except for binary erasure channels (in which case it amounts to solving linear equations which has polynomial computational complexity).
\begin{figure}[t]
    \vspace{.6\baselineskip}
    \includegraphics[scale=1]{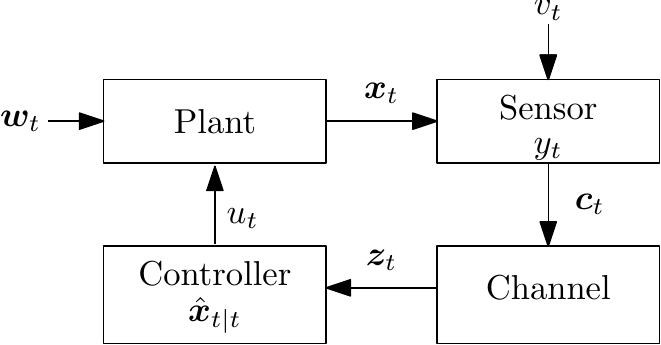}
    \centering
    \caption{Basic networked control system.}
    \label{fig:noisy_feedback_link}
\end{figure}

Sequential decoding was proposed by Wozencraft~\cite{WozencraftPhD} and subsequently improved by others as a means to recover random tree codes with reduced complexity with some compromise in performance; specifically, for the expected complexity to be finite, the maximal communication rate should be lower than the cutoff rate. For a thorough account of sequential decoding, see~\cite[Ch.~10]{JelinekBook}, \cite[Sec.~6.9]{GallagerBook1968}, \cite[Ch.~6]{ViterbiOmuraBook}, \cite[Ch.~6]{ZigangirovBook}.
This technique was subsequently adopted by Sahai and Palaiyanur~\cite{Sahai_SequentialDecoding} for the purpose of decoding (time-varying) tree codes for networked control. 
Unfortunately, this result relies on an exponential bound on the error probability by Jelinek \cite[Th.~2]{Jelink_SequentialDecoding:Parameters} that is valid 
for the binary symmetric channel (BSC) (and other cases of interest) only when the expected complexity of the sequential decoder goes to infinity~\cite{ArikanSequentialDecoding}.

In this work we propose the usage of sequential decoding for the recovery of LTI tree codes. 
To that end, similarly to Sahai and Palaiyanur~\cite{Sahai_SequentialDecoding}, 
we extend a (different) result developed by Jelinek~\cite[Th.~10.2]{JelinekBook} for general (non-linear and time-variant) random codes to LTI tree codes.

%%%%%%%%%%%%%%%%%%%%%%%%%%%%%%%%%%%%%%%%%%%%%%%%%%%%%%%%%%%%%%%%%%%%%%%%%%%%%%%%%%%%%%%%%%%%%%%%%%%%%%%%%%%%%%

\section{Problem Setup and Motivation}
\label{s:bkg}

We are interested in stabilizing an unstable plant driven by bounded noise over a noisy communication link. In particular, an observer of the plant measures at every time instant $t$ a noisy version $y_t \in \mathbb{R}$ (with bounded noise) of the state of the plant $\bx_t \in \mathbb{R}^m$. The observer then quantizes $y_t$ to $\bb_t \in \mathbb{Z}_2^k$, and encodes --- using a causal code --- all quantized measurements $\{\bb_i\}_{i = 1}^{t}$ to produce $\bc_t \in \mathbb{Z}_2^n$.  This packet $\bc_t$ is transmitted over a noisy communication link to the controller, which receives $\bz_t \in \mA^n$,  where $\mA$ is the channel output alphabet. The controller then decodes $\{\bz_i\}_{i = 1}^{t}$ to produce the estimates $\{\hat{\bb}_{i|t}\}_{i = 1}^{t}$, 
where $\hat{\bb}_{i|t}$ denotes the estimate of $\bb_i$ when decoded at time $t$. These estimates are 
mapped back to measurement estimates $\{\hat{y}_{i|t}\}_{i = 1}^{t}$ which, in turn, are used to give an estimate $\hat{\bx}_{t|t}$ of the current state of the plant. Finally, the controller computes a control signal $\bu_{t}$ based on $\hat{\bx}_{t|t}$ and applies it to the plant. 

The need for causally sending measurements of the state in real time motivates the use of causal codes in this problem. Generally speaking, a causal code maps, at each time instant $t$, the current and all previous quantized measurements to a packet of $n$ bits$\bc_t$,
\begin{align}
\bc_t = f_{t}\left(\left\{\bb_i\right\}_{i = 1}^t\right) .
\end{align}

When restricted to linear codes, each function 
$f_t$
can be characterized by a set of matrices $\{\mathbf{G}_{t,1},\ldots,\mathbf{G}_{t,t}\}$, where $\mathbf{G}_{t,i} \in \mathbb{Z}_2^{n \times k}$. 
The sequence of quantized measurements at time $t$, $\{\bb_i\}_{i=1}^t$, is encoded as,
\begin{align}
\bc_t = \mathbf{G}_{t,1}\bb_1 + \mathbf{G}_{t,2}\bb_2 + \cdots +  \mathbf{G}_{t,t}\bb_t \,.
\end{align}
The decoder computes a function $g_t\left(\{\bz_{i}\}_{i = 1}^t\right)$ to produce $\{\bb_{i|t}\}_{i = 1}^{t}$. 
One is then assigned the task of choosing a sequence of matrices 
$\{\mathbf{G}_{t,i} \, | \, i = 1, \ldots, t \,; \, t=1, \ldots, \infty \}$ 
that provides anytime reliability. We recall this definition as stated in~\cite{SukhavasiHassibi}. 

\begin{defn}
\label{def:anytime_reliability:system}
    Define the probability of the first error event as 
    \begin{align*}
	P_e(t, d) \triangleq 
	P \left( b_{t-d} \neq \hbb_{t-d|t}, \forall \delta > d, b_{t-\delta} = \hbb_{t-\delta|t} \right), 
    \end{align*}
    where the probability is over the randomness of the plant and the channel noise. Suppose we are assigned a budget of $n$ channel uses per time step of the evolution of the plant. 
    Then, an encoder--decoder pair is called $(R,\beta)$ anytime reliable if there exists $d_0 \in \nats$, such that
    \begin{align}
    \label{eq:anytime_reliability}
	P_e(t,d) &\leq 2^{-\beta nd}, & \forall t,d \geq d_0 , 
    \end{align}
    where $\beta$ is called the anytime exponent. 
\end{defn}
According to the definition, anytime reliability has to hold for every decoding instant $t$ and every delay $d$. Sukhavasi and Hassibi proposed in~\cite{SukhavasiHassibi} a code construction based on Toeplitz block-lower triangular parity-check matrices, that provides an error exponent for all $t$ and $d$. 
The Toeplitz property, in turn,
avoids 
the need to compute a double union bound. We shall explicitly show this later in \secref{ss:LTI}, where we introduce the LTI ensemble.

In this work, 
the communication
link between the observer and the controller is assumed to be a memoryless binary-input output-symmetric (MBIOS) channel: $w(z_i|c_i = 0) = w(-z_i|c_i = 1)$, where $w$ is the channel transition distribution, $z_i \in \mA$ and $c_i \in \mathbb{Z}_2$.

%%%%%%%%%%%%%%%%%%%%%%%%%%%%%%%%%%%%%%%%%%%%%%%%%%%%%%%%%%%%%%%%%%%%%%%%%%%%%%%%%%%%%%%%%%%%%%%%%%%%%%%%%%%%%%

\section{Preliminaries: Convolutional Codes}
\label{s:CC}

In this section we
review known result for several random ensembles of \CCs, in \secref{ss:CC:bounds}.
The codes within each ensemble can be either linear or not; linear ensembles can be further either time variant 
or time invariant. We further discuss sequential decoding algorithms and their performance 
in \secref{ss:CC:ML} 
which will be applied in the sequel for tree codes.

%-------------------------------------------------------------------------------------------------------------

\subsection{Bounds on the Error Probability under ML Decoding}
\label{ss:CC:bounds}

We now recall exponential bounds for convolutional codes under certain decoding regimes.

A compact representation (and implementation)
of a \CC\ is via a shift register:
The delay-line (shift register) length is denoted by $\conlen$, 
whereas its width $\kcc$ is the number of information bits
entering the shift register at each stage. Thus, the total
memory size is equal to $\conlen \kcc$ bits. At each stage, $\ncc$ code
bits are generated by evaluating $\ncc$ functionals over the $\conlen \kcc$
memory bits and the new $\kcc$ information bits. We refer to these $\ncc$ bits 
as a single branch.
Therefore, the rate of the code is equal to
$R = \kcc / \ncc$
bits per channel use. 
In general, these functionals may be either linear or not, resulting in linear or non-linear \CCs, respectively, 
and stay fixed or vary across time, resulting in time-invariant or time-variant \CCs.
We further denote the total length of the \CC\ frame upon truncation by $\framelen$.

Typically, the total length of the \CC\ frame is chosen to be much larger than $\conlen$, \ie, $\framelen \gg \conlen$.
We shall see in \secref{ss:LTI}, that in the context of tree codes, 
a decoding delay of $\conlen$ time steps of the evolution of the plant into the past corresponds to a \CC\ with delay-length $\conlen$.
Since each time step corresponds to $\ncc$ uses of the communication link, 
the relevant regime for the current work is 
$\framelen = \ncc \conlen$.

\begin{thm}[{\!\cite[Sec.~5.6]{ViterbiOmuraBook}, \cite[Sec.~4.8]{ZigangirovBook}\!}]
\label{thm:CC:ML:BER}
    The probability of the first error event of random time-variant and LTI \CCs, under optimal (maximum likelihood) decoding, is bounded from the above by
    \begin{align}
    \label{eq:CC:Pe}
	\Pe(d) \leq 2^{- E_G(R) \ncc \conlen}
    \end{align}
    where $E_G(R)$ is Gallager's error exponent function~\cite[Sec.~5.6]{GallagerBook1968}, defined as (see also \figref{fig:exponents}):
    \begin{subequations}
    \label{eq:Er}
    \noeqref{eq:Er:def,eq:Er:E0}
    \begin{align}
	E_G(R) &\triangleq \max_{0 \leq \rho \leq 1} \left[ E_0(\rho) - \rho R \right] ,
    \label{eq:Er:def}
     \\
     E_0(\rho) &\triangleq 1+\rho - \log \left\{ \sum_{z \in \mA} \left[ w^{\frac{1}{1+\rho}}(z|0) + w^{\frac{1}{1+\rho}}(z|1) \right]^{1 + \rho} \right\} .
    \label{eq:Er:E0}
    \end{align}
    \end{subequations}
\end{thm}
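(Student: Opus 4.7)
The plan is to follow the standard random coding argument for convolutional codes, adapted to the time-invariant linear case. I would condition on a reference transmitted path and union-bound the probability of the first error event at depth $d$ by summing, over all incorrect paths that diverge from the correct one at that depth, the pairwise probability that an incorrect path has higher likelihood than the correct one.

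First, by the MBIOS property combined with the linearity of the code and the zero-mean symmetry of the ensemble, I would reduce to the case where the all-zero information sequence (and hence the all-zero codeword) was transmitted. A first error event at depth $d$ corresponds to the ML decoder preferring some codeword whose associated information sequence differs from zero in a block of branches of length $\ell \geq d$ ending at depth $d$, where the $\ell$ differing information sub-blocks produce $\ncc \ell$ nonzero code bits on a segment of the trellis. The number of such ``incorrect'' paths of length exactly $\ell$ is bounded by $2^{\kcc \ell}$, since each of the $\ell$ information sub-blocks contributes at most $2^{\kcc}$ choices, and the first sub-block may be taken to be nonzero without loss of generality.

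Next, for each such incorrect path, I would apply Gallager's tilted-union Bhattacharyya-type bound: for any $\rho \in [0,1]$, the pairwise probability that the ML decoder prefers a codeword differing from the transmitted one on $\ncc \ell$ positions, averaged over the random code ensemble, is at most $2^{-E_0(\rho) \ncc \ell}$, with $E_0(\rho)$ as in \eqref{eq:Er:E0}. Here a crucial point specific to the linear ensembles (both time-varying and LTI) is that, by linearity, any two distinct information sequences produce a difference codeword that is itself the random encoding of their difference, so the pairwise error probability has the same Gallager-type bound as in the classical non-linear random coding argument. Combining the counting bound and the pairwise bound and summing a geometric series in $\ell \geq d$ yields
\begin{align}
\Pe(d) \;\leq\; \sum_{\ell \geq d} 2^{\kcc \ell} \, 2^{-E_0(\rho) \ncc \ell} \;\leq\; C \cdot 2^{-[E_0(\rho) - \rho R] \ncc d}
\end{align}
for any $\rho \in (0,1]$ with $E_0(\rho) > \rho R$, and a constant $C$ absorbing the geometric tail; optimizing over $\rho$ gives $E_G(R)$ and the claim \eqref{eq:CC:Pe} (absorbing $C$ into a negligible additive term in the exponent, or refining the counting to avoid it).

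The main obstacle I expect is the LTI case. For a time-varying random \CC, the generator matrices at different branch positions are mutually independent, and the pairwise analysis is immediate. For an LTI ensemble, only one set of generators is drawn and then used shift-invariantly, so different pairwise error events are no longer generated by independent random variables. The resolution, which I would carry out carefully, is to observe that the pairwise probability of error only depends on the \emph{single} random difference codeword associated with a given nonzero difference information sequence, and that in the LTI ensemble this difference codeword, when seen as a sequence of $\ncc \ell$ code bits, is still uniformly distributed over $\mathbb{Z}_2^{\ncc \ell}$ (since the shift-invariant generator is drawn uniformly). Hence the marginal pairwise error probability satisfies the same Gallager bound as in the time-varying case, and the union bound above carries through without requiring joint independence across different incorrect paths. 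This is the same mechanism exploited by the Toeplitz structure in \secref{ss:LTI} and is what allows the LTI bound to match the time-varying one.
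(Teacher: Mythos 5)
The paper does not actually prove Theorem~\ref{thm:CC:ML:BER}: it is imported from \cite[Sec.~5.6]{ViterbiOmuraBook} and \cite[Sec.~4.8]{ZigangirovBook}, with the LTI case deferred to \cite{TimeInvariantConvolutional}. So the comparison is against the standard textbook derivation, and your outline is indeed that derivation; in particular, your resolution of the LTI difficulty --- that the argument only ever uses the marginal law of each difference codeword (equivalently, pairwise independence of the correct and an incorrect path from the first disagreeing branch), which the Toeplitz structure preserves --- is precisely the mechanism the paper itself invokes in the proof of Theorem~\ref{lem:LTI:SeqDec} and the content of \cite{TimeInvariantConvolutional}.

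There is, however, a concrete error in your displayed chain. Multiplying a pairwise bound $2^{-E_0(\rho)\ncc\ell}$ by the path count $2^{\kcc\ell}$ gives a summand $2^{-\ncc\ell\left(E_0(\rho)-R\right)}$, so the geometric series yields exponent $E_0(\rho)-R$, not $E_0(\rho)-\rho R$. Since $E_0(\rho)$ is increasing in $\rho$, the best you can extract from that expression is $E_0(1)-R=R_0-R$, the cutoff-rate exponent, which coincides with $E_G(R)$ only for $R\le R_{\mathrm{crit}}$ and is strictly weaker (indeed vacuous for $R>R_0$) above it. The $\rho R$ term does not come from ``plain union bound times pairwise bound''; it comes from Gallager's device of bounding the error probability by $E\bigl[\bigl(\sum_{m'}\Lambda_{m'}\bigr)^{\rho}\bigr]$ and using Jensen (concavity of $x^{\rho}$ for $\rho\le1$) to pull the ensemble average inside the $\rho$-th power, which is what produces the factor $(M-1)^{\rho}=2^{\rho\kcc d}$ and simultaneously explains the restriction $0\le\rho\le1$. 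Reassuringly, that Jensen step also requires only the marginal distribution of each $\Lambda_{m'}$, so your pairwise-independence argument for the LTI ensemble survives the correction intact. Two smaller points: in the regime $\framelen=\ncc\conlen$ relevant here, the competitors diverging at delay $d$ extend for exactly $d$ branches, so no sum over $\ell\ge d$ or remerging bookkeeping is needed; and for the ensemble of \defnref{def:LTI_ensemble}, with $\bG_1$ fixed and full rank, the first branch of a difference codeword is \emph{not} uniform --- uniformity holds only from the subsequent branches onward (or after the random affine translation used in the proof of Theorem~\ref{lem:LTI:SeqDec}), a detail you should handle explicitly rather than asserting uniformity over all of $\mathbb{Z}_2^{\ncc\ell}$.
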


    We note that 
    in the common work regime of $\framelen \gg \conlen$, 
    the optimal achievable error exponent was proved by Yudkin and by Viterbi to be much better than $E_G(R)$ \cite[Ch.~5]{ViterbiOmuraBook}. 
    Unfortunately, this result does not hold for the case of $\framelen = \ncc \conlen$ which is the relevant regime for this work.

Interestingly, whereas time-variant codes are known to achieve better error exponents than linear time-invariant (LTI) ones when $\framelen \gg \conlen$, this gain vanishes when $\framelen = \ncc \conlen$, as is suggested by the following theorem.
\begin{thm}[{\cite[Eq.~(14)]{TimeInvariantConvolutional}}]
    The probability of the first error event of LTI random \CCs, under optimal (maximum likelihood) decoding, is bounded from the above by \eqref{eq:CC:Pe}.
\end{thm}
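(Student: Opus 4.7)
The plan is to extend the classical Gallager random-coding argument for convolutional codes \cite[Sec.~5.6]{GallagerBook1968} to the random LTI ensemble. The only substantive departure from the time-variant case is that the $\conlen$ branches no longer use independent generator matrices, so the key work will be to show that the relevant averaged moments still factor across time.

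First I would use the linearity of the code to assume, without loss of generality, that the all-zero information sequence is transmitted. A first error event at delay $\conlen$ then corresponds to the ML decoder choosing an incorrect information path $e$ whose first nonzero block occurs at the divergence point. Applying a union bound over such paths,
\begin{align}
\Pe(\conlen) \leq \sum_{e : e_1 \neq \bzero} \mathbb{E}_G\!\left[P\!\left(\bzero \to G e \,\middle|\, \bzero \text{ transmitted}\right)\right] ,
\end{align}
where the expectation is over iid uniform matrices $G_1, \ldots, G_\conlen \in \mathbb{Z}_2^{\ncc \times \kcc}$ defining the Toeplitz generator, and $e = (e_1, \ldots, e_\conlen)$ is indexed from the divergence point.

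The crux, and the main obstacle relative to the time-variant case, is the following independence lemma that I would need to establish: for every fixed $e$ with $e_1 \neq \bzero$, the code error blocks $(G e)_1, (G e)_2, \ldots, (G e)_\conlen$ are mutually independent and each uniform on $\mathbb{Z}_2^{\ncc}$. I would prove this by induction on $j$: the $j$-th block can be written as $G_j e_1 + \varphi_j(G_1, \ldots, G_{j-1})$ for some deterministic function $\varphi_j$, and since $G_j$ is independent of $G_1, \ldots, G_{j-1}$ and $e_1 \neq \bzero$, the fresh summand $G_j e_1$ is uniform on $\mathbb{Z}_2^{\ncc}$ even conditional on the earlier blocks. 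Hence the joint law of $G e$, restricted to error patterns satisfying the first-error-event condition $e_1 \neq \bzero$, coincides exactly with that produced by a time-variant random convolutional code of the same rate --- which is precisely the regime in which the union bound is taken.

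With the independence lemma in hand, the argument reduces to the time-variant computation of \thmref{thm:CC:ML:BER}: Gallager's bound with parameter $\rho \in [0, 1]$ produces an averaged pairwise bound of $2^{-\ncc \conlen E_0(\rho)}$ per path, there are at most $2^{\kcc \conlen}$ such paths to sum over, and optimizing over $\rho$ gives
\begin{align}
\Pe(\conlen) \leq 2^{-\ncc \conlen (E_0(\rho) - \rho R)} \leq 2^{-\ncc \conlen E_G(R)} ,
\end{align}
which is~\eqref{eq:CC:Pe}.
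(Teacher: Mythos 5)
The paper does not actually prove this theorem: it is imported verbatim from Shulman and Feder \cite[Eq.~(14)]{TimeInvariantConvolutional}, so there is no in-paper argument to compare yours against line by line. That said, your sketch follows the standard route to this result, and your ``independence lemma'' correctly isolates the one place where time invariance could have hurt: for any error pattern $e$ with $e_1 \neq \bzero$, the block $(\bG e)_j = \bG_j e_1 + \varphi_j(\bG_1,\ldots,\bG_{j-1})$ is uniform conditionally on the earlier blocks because the freshest Toeplitz diagonal $\bG_j$ multiplies the nonzero leading block $e_1$. This is essentially the same observation the paper itself does spell out later (proof of \thmref{lem:LTI:SeqDec}, requirement~1) for the sequential-decoding analogue, so your argument is consistent in spirit with what the authors prove elsewhere. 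You are also right that only the \emph{marginal} uniformity of each competitor $\bG e$ enters the averaged bound, so independence \emph{across} competing paths is not needed.

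Three points to tighten. First, in the ensemble of \defnref{def:LTI_ensemble} the matrix $\bG_1$ is \emph{fixed} full rank, not random, so $(\bG e)_1 = \bG_1 e_1$ is a deterministic nonzero vector rather than uniform; you must either randomize $\bG_1$ as well (as the paper effectively does via its affine ensemble in the proof of \thmref{lem:LTI:SeqDec}) or peel off the first branch and absorb its contribution into a constant. Second, your displayed union bound is the plain sum of expected pairwise error probabilities, which yields the exponent $E_G(R)$ only for rates where $\rho=1$ is the maximizer (below $R_{\text{crit}}$); to cover all $R$ you need Gallager's $\rho$-th-moment form, applying Jensen's inequality in the shape $\mathbb{E}\bigl[\bigl(\sum_e X_e\bigr)^{\rho}\bigr] \leq \bigl(\sum_e \mathbb{E}[X_e]\bigr)^{\rho}$ before optimizing over $\rho \in [0,1]$ --- your prose invokes the right bound, but the display is the $\rho = 1$ special case. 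Third, the final chain should read: choosing the maximizing $\rho$ gives $2^{-\ncc\conlen(E_0(\rho)-\rho R)} = 2^{-\ncc\conlen E_G(R)}$; as written, the middle inequality points the wrong way for a generic $\rho$. None of these is a conceptual gap; they are repairs within the approach you chose.
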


Thus, \eqref{eq:CC:Pe} remains valid for LTI codes.

Unfortunately, the computational complexity of maximum-likelihood decoding grows exponentially with the delay-line length $\conlen$, prohibiting its use in practice for large values of $\conlen$.

We therefore review next a suboptimal decoding procedure, the complexity of which does not grow rapidly with $\conlen$ but still achieves exponential decay in $\conlen$ of the BER.

%-------------------------------------------------------------------------------------------------------------

\subsection{Sequential Decoding}
\label{ss:CC:ML}

The Viterbi algorithm~\cite[Sec.~4.2]{ViterbiOmuraBook} offers an efficient implementation of (frame-wise) ML\footnote{For bitwise ML decoding, the BCJR algorithm~\cite{BCJR} needs to be used.} decoding for fixed $d$ and growing $\framelen$.
Unfortunately, the complexity of this algorithm grows exponentially with $\framelen$ when the two are coupled, 
\ie, $\framelen = \ncc \conlen$.\footnote{This is true with the exception of the binary erasure channel, for which ML decoding amounts to solving a system of equations, the complexity of which is polynomial.}
Prior to the adaptation of the Viterbi algorithm as the preferred decoding algorithm of \CCs, 
sequential decoding was served as the \emph{de facto} standard. 
A wide class of algorithms fall under the umbrella of ``sequential decoding''. 
Common to all is the fact that they explore only a subset of the (likely) codeword paths, such that 
their complexity does not grow (much) with $d$, and are therefore applicable for the decoding of tree codes.\footnote{Interestingly, the idea of tree codes was conceived and used already in the early works on sequential decoding~\cite{WozencraftPhD}. These codes were used primarily for the classical communication problem, and not for interactive communication or control.}

In this work we shall concentrate on the two popular variants of this algorithm~--- the Stack and the Fano (which is characterized by a quantization parameter $\Delta$) algorithms.

We next summarize the relevant properties of these decoding algorithms when using the generalized Fano metric (see, \eg, \cite[Ch.~10]{JelinekBook}) to compare possible codeword paths: 
\begin{subequations}
\label{eq:FanoMetric}
\noeqref{eq:FanoMetric:total,eq:FanoMetric:single}
\begin{align}
    M(c_1, \ldots, c_N) &= \sum_{t=1}^N M(c_t) ,
 \label{eq:FanoMetric:total}
      \\
M(c_t) &\triangleq \log \frac{w(z_t|c_t)}{p(z_t)} - \bias ,
 \label{eq:FanoMetric:single} 
\end{align}
\end{subequations}
where $\bias$ is referred to as the metric bias and penalizes longer paths when the metrics of different-length paths are compared.
In contrast to ML decoding, where all possible paths (of length $N$) are explored to determine the path with the total maximal metric,\footnote{Note that optimizing \eqref{eq:FanoMetric:total} in this case is equivalent to ML decoding.} using the stack sequential decoding algorithm, 
a list of partially explored paths is stored, 
where at each step the path with the highest metric is further explored and replaced with its immediate descendants and their metrics. The Fano algorithm achieves the same without storing all these potential paths, at the price of a constant increase in the error probability and computational complexity; for a detailed descriptions of both algorithms see~\cite[Ch.~10]{JelinekBook}, \cite[Sec.~6.9]{GallagerBook1968}, \cite[Ch.~6]{ViterbiOmuraBook}, \cite[Ch.~6]{ZigangirovBook}.

The choice $\bias = R$ is known to minimize the expected computational complexity, and is therefore the most popular choice in practice. Moreover, for rates below the cutoff rate $R < R_0 \triangleq E_0(\rho=1)$, 
the expected number of metric evaluations \eqref{eq:FanoMetric:single} is finite and does not depend on $d$, for any $\bias \leq R_0$~\cite[Sec.~6.9]{GallagerBook1968}, \cite[Ch.~10]{JelinekBook}.
Thus, the only increase in expected complexity of this algorithm with $d$ comes from an increase in the complexity of evaluating the metric of a single symbol \eqref{eq:FanoMetric:single}. 
Since the latter increases (at most) linearly with $d$, 
the total complexity of the algorithm grows polynomially in $d$. 
Furthermore, for rates above the cutoff rate, $R > R_0$, 
the expected complexity is known to grow rapidly with $N$ for \emph{any metric}~\cite{ArikanSequentialDecoding}, 
implying that the algorithm is applicable only for rates below the cutoff rate.

Most results concerning the error probability under sequential decoding consider an infinite code stream (over a trellis graph) and evaluate the probability of an erroneous path to diverge from the correct path and re-merge with the correct path, which can only happen for $\framelen > \ncc \conlen$.
Such analyses are not adequate for our case of interest, in which $\framelen = \ncc \conlen$. 
The following theorem provides a bound for our case.

\begin{thm}[{\!\! \cite[Ch.~10]{JelinekBook}}]
\label{thm:Jelinek:RandomCC}
    The probability of the first error event of general random \CCs, 
    using the Fano or stack sequential decoders and the Fano metric with bias $\bias$, 
    is bounded from the above by:
    \begin{subequations}
    \label{eq:SeqDec:Pe}
    \noeqref{eq:SeqDec:Pe:bound,eq:SeqDec:Pe:exponent}
    \begin{align}
	\Pe(t, \conlen) &\leq A\,2^{ - E_J(\bias,R) \ncc \conlen} , 
    \label{eq:SeqDec:Pe:bound}
     \\ E_J(\bias, R) &\triangleq \max_{0 \leq \rho \leq 1} \frac{\rho}{1 + \rho} \Big\{ E_0(\rho) + \bias - (1 + \rho) R \Big\} ,
    \label{eq:SeqDec:Pe:bound}
    \end{align}
    \end{subequations}
    where $A$ is finite for $\bias < R_0$ and is upper bounded by\footnote{Note that $E_0(\rho)/\rho$ is a monotonically decreasing function of $\rho$, therefore $\bias < R_0 = E_0(1)$ guarantees that $E_0(\rho) - \rho \bias > 0$.}
    \begin{align}
    \label{eq:SeqDec:A}
	A \leq \e^{\frac{\rho}{1+\rho} \Delta} \frac{1 - \e^{-t [E_0(\rho) - \rho \bias]}}{1 - \e^{-[E_0(\rho) - \rho \bias]}}
	\leq \frac{\e^{\frac{\rho}{1+\rho} \Delta} }{1 - \e^{-[E_0(\rho) - \rho \bias]}} < \infty ,\ \  
    \end{align}
    for a quantization parameter $\Delta$ in the Fano algorithm; for the stack algorithm \eqref{eq:SeqDec:A} holds true with 
    $\Delta = 0$.
\end{thm}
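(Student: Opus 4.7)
The plan is to follow Jelinek's classical random-coding argument for sequential decoding, adapted to the short-frame regime $\framelen = \ncc \conlen$ that is relevant for tree codes. First I would express the first-error event at delay $\conlen$ as a union, over the divergence node $\tau \in \{0, 1, \ldots, t-1\}$ at which an incorrect path $\tbc$ first departs from the correct path $\bc$, and over the number of information branches $\ell$ by which $\tbc$ has been extended beyond $\tau$, of the event
\begin{align}
M(\tbc_{1:\ell}) \;\geq\; M(\bc_{1:\ell}) - \Delta,
\end{align}
where $\Delta = 0$ for the Stack algorithm and $\Delta$ equals the Fano quantization parameter otherwise. This is the event that, at some intermediate depth along its exploration, the sequential decoder was fooled into preferring $\tbc$ over the true path.

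For a fixed $(\tau, \ell, \tbc)$, I would apply a Chernoff-style bound with tilt parameter $s = 1/(1+\rho)$, $\rho \in [0,1]$, obtaining
\begin{align}
\Pr\!\bigl[M(\tbc_{1:\ell}) \geq M(\bc_{1:\ell}) - \Delta\bigr] \leq 2^{s\Delta} \, \mathbb{E}\!\left[2^{s(M(\tbc_{1:\ell}) - M(\bc_{1:\ell}))}\right].
\end{align}
Averaging over the random code ensemble with independent uniform branch symbols, the expectation factorizes across the $\ell \ncc$ coded bits of the diverging portion, and Gallager's identity (definition \eqref{eq:Er:E0} of $E_0$) simplifies each branch to $2^{-\ncc E_0(\rho)/(1+\rho)}$.

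Next I would union-bound over the at most $2^{\kcc \ell}$ incorrect extensions of length $\ell$ branching off at a given $\tau$. Inserting this Gallager-style path count, together with the bias contribution that arises because the sequential decoder ends up comparing paths of \emph{unequal} lengths as it explores, re-assembles the per-$\ell$ bound into $\e^{\rho \Delta/(1+\rho)} \cdot 2^{-\ncc \ell \, E_J(\bias, R)}$. Summing over $\ell \geq \conlen$ produces the leading factor $2^{-\ncc \conlen E_J(\bias, R)}$ of the announced bound, while the outer union over the divergence times $\tau \in \{0, \ldots, t-1\}$ yields the truncated geometric series $\sum_{\tau=0}^{t-1} \e^{-\tau[E_0(\rho) - \rho \bias]}$, matching the middle factor of \eqref{eq:SeqDec:A}. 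Both geometric sums converge precisely when $E_0(\rho) > \rho \bias$; since $E_0(\rho)/\rho$ is monotonically decreasing in $\rho$, the hypothesis $\bias < R_0 = E_0(1)$ secures this for every $\rho \in (0,1]$.

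The main obstacle, I expect, is the delicate bookkeeping involved in letting paths of different lengths compete in the same Chernoff inequality~--- this is exactly what forces the additive $\bias$ into the exponent $E_J$ and what distinguishes the analysis from the ML bound of \thmref{thm:CC:ML:BER}, where only equal-length codewords compete. Once that step is handled, the remaining work is routine: verify that all Gallager-type factorizations are justified for $\rho \in [0,1]$ and $s = 1/(1+\rho)$, optimize over $\rho$ to produce $E_J(\bias, R)$, and collect the two geometric prefactors into $A$ as stated.
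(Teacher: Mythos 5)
The paper itself gives no proof of this theorem---it is imported verbatim from \cite[Ch.~10]{JelinekBook}---and the only place it engages with the internals of Jelinek's argument is the sketch of \thmref{lem:LTI:SeqDec}, which isolates the two properties the proof actually uses: pairwise independence of code paths from the first disagreeing source branch onward, and i.i.d.\ uniform codeword symbols. Your outline invokes exactly those two ingredients (independence to factorize the Chernoff expectation over the diverging portion, uniformity to apply Gallager's identity for $E_0$), and your decomposition over divergence node and extension length correctly accounts for the truncated geometric series in \eqref{eq:SeqDec:A}, so you are on the intended route.

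There is, however, one concrete gap. The error event you display, $M(\tbc_{1:\ell}) \geq M(\bc_{1:\ell}) - \Delta$, compares the two paths at the \emph{same} length $\ell$. Under the Fano metric \eqref{eq:FanoMetric}, equal-length paths see the same channel outputs, so both the $-\bias$ terms and the $\log p(z_t)$ terms cancel identically in the difference; the event reduces to a pure likelihood-ratio comparison and the resulting exponent cannot depend on $\bias$, contradicting the form of $E_J(\bias,R)$. The correct necessary condition for the Fano or stack decoder to advance along an incorrect path to depth $\ell$ is that its metric exceed $\min_j M(\bc_{1:j}) - \Delta$, the running minimum of the correct path's metric over \emph{all} depths $j$; it is this unequal-length comparison that leaves a residual $\bias$ per excess branch and produces the combination $E_0(\rho) + \bias - (1+\rho)R$. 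You name this as ``the main obstacle'' in prose, but the displayed event does not encode it, and the subsequent bookkeeping (which sum yields $2^{-\ncc\conlen E_J}$ versus which yields the prefactors) hinges on it. Relatedly, the factor $\e^{\frac{\rho}{1+\rho}\Delta}$ in \eqref{eq:SeqDec:A} betrays that the net Chernoff coefficient on the $\Delta$ offset is $\rho/(1+\rho)$, not the single tilt $s = 1/(1+\rho)$ you apply: Jelinek's bound is genuinely a two-parameter argument (one tilt for the incorrect path's rise, one for the correct path's drop below the threshold), and collapsing it to one parameter will not reproduce the stated constant. Both issues are repaired by following \cite[Ch.~10]{JelinekBook} closely, but as written the proposal does not yield the claimed bound.
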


Since, $E_J(\bias,R)$ is a monotonically increasing function of $\bias$, 
choosing $\bias = R_0$ maximizes the exponential decay of $\Pe(\conlen)$ in $\conlen$.\footnote{For finite values of  $\conlen$ a lower choice of $\bias$ might be better, since the constant $A$ might be smaller in this case.}
Interestingly, for this choice of bias, $E_J(\bias=R_0, R) = E_G(R)$ whenever $E_G(R)$ is achieved by $\rho = 1$ 
in \eqref{eq:Er:def}, \ie, for rates below the critical rate $R < R_\text{crit} \triangleq E'_0(\rho)$. For other values of $\rho$, $E_J(\bias = R_0, R)$ is strictly smaller than $E_G(R)$ (see \figref{fig:exponents}).

For the choice of bias that optimizes complexity, $\bias = R$, on the other hand, an error exponent which equals to at least half the exponent under ML decoding \eqref{eq:CC:Pe} is achieved whenever $E_G(R)$ is achieved by $\rho=1$: $E_J(\bias=R,R) \geq E_G(R) / 2$ (see \figref{fig:exponents}).

\begin{figure}[t]
    \centering
    \epsfig{file = ./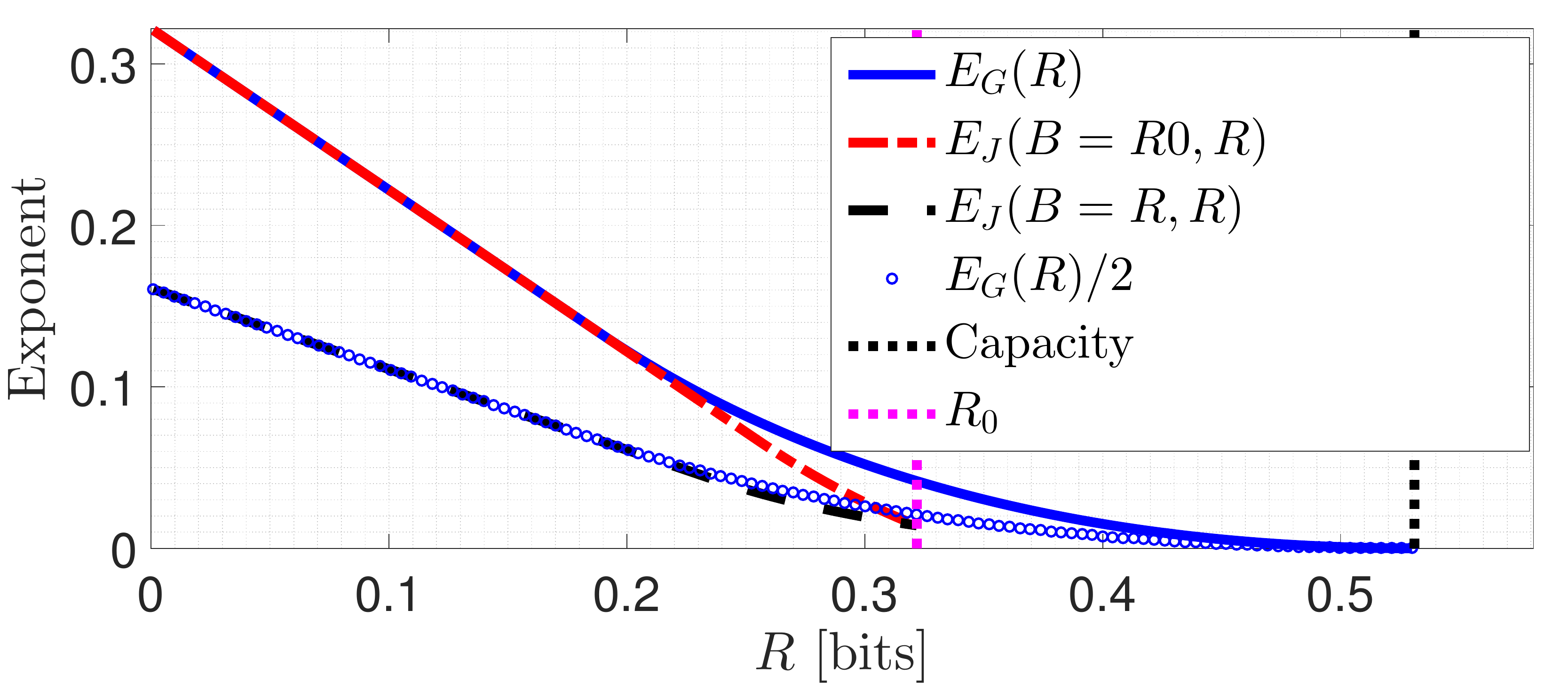, width = \columnwidth}
    \centering
    \caption{Error exponents $E_G(R)$, $E_J(\bias=R_0, R)$ and $E_J(\bias=R, R)$ for a BSC with crossover probability $p = 0.1$.}
    \label{fig:exponents}
\end{figure}

We now turn to bounding the number of branch computations per node of the code tree under sequential decoding.

\begin{defn}
\label{defn:complexity}
    Denote by $W_t$ the number of branch computations of node $t$ performed by a sequential decoding algorithm.
\end{defn}

We note that $W_t$ is a random variable (which depends on the received vector and the underlying tree code used). 
Since $W_t$ is equal to just one more than the number of branch computations in the incorrect sub-tree of that node, 
it has the same distribution for any $t$, for general random and LTI random codes.

We use the following result which can be found in \cite[Sec.~6.2]{ViterbiOmuraBook} for general random \CCs, 
and in~\cite[Sec.~6.9]{GallagerBook1968} for LTI \CCs; 
for both, delay-line length $\conlen$ is assumed to be infinite.\footnote{For finite delay-length \CCs, the computational complexity can only be smaller than that of infinite delay-length codes.}

\begin{thm}
\label{thm:SeqDec:complexity:UB}
    The probability that $W_t$ of a general random \CC\ or an LTI \CC\ with infinite delay-length is larger than $m \in \nats$ is upper bounded by 
    \begin{align}
    \label{eq:Pareto:UB}
	\Pr \left( W_t \geq m \right) \leq A m^{-\rho} ,
    \end{align}
    where $A$ is finite for $\bias, R < R_0$, $R < \frac{\bias + R_0}{2 \rho}$, and $\rho \in (0, 1]$.
\end{thm}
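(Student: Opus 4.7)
The plan is to follow the classical Pareto-tail analysis of the computational distribution of sequential decoders developed by Jacobs--Berlekamp, Savage, and Jelinek, and presented in \cite[Sec.~6.2]{ViterbiOmuraBook} and \cite[Sec.~6.9]{GallagerBook1968}, while verifying that the argument carries over to the LTI ensemble. The end goal is to control $E[W_t^\rho]$ and then invoke Markov's inequality $\Pr(W_t \geq m) \leq E[W_t^\rho]/m^\rho$ to obtain the Pareto tail.

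The first step is to bound $W_t$ by one plus the number of incorrect-subtree nodes that the sequential decoder ever visits. Since either the stack or the Fano decoder can reach a depth-$\ell$ node on an incorrect subpath only if the Fano metric of some ancestor along that subpath exceeds the running minimum of the correct-path metric, letting $N_\ell$ denote the number of such visited nodes yields
\begin{equation}
    W_t \;\leq\; 1 + \sum_{\ell \geq 1} N_\ell .
\end{equation}
Using output symmetry of the MBIOS channel and linearity of the code, I may assume that the all-zero codeword was transmitted, and then a Chernoff bound on the random walk of per-step metric differences, tilted by a parameter $s \in (0,1)$ that depends on $\rho$, controls the probability that any single length-$\ell$ incorrect deviation is visited by an expression of the form $c \cdot 2^{-\ell \gamma}$ for $\gamma > 0$ an explicit function of $E_0$, $\rho$, $\bias$, and $R$. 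Positivity of $\gamma$ is secured by the assumption $\bias < R_0 = E_0(1)$, which makes the drift of the correct-path walk positive.

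In the second step I would take a union bound over the at most $2^{\kcc \ell}$ deviation sequences of length $\ell$ and apply the elementary inequality $(\sum_i a_i)^\rho \leq \sum_i a_i^\rho$, valid for $\rho \in (0,1]$ and non-negative reals, to obtain $E[N_\ell^\rho] \leq \sum_i \Pr(\text{path $i$ visited})$ (and similarly for $W_t^\rho$ from $W_t \leq 1 + \sum_\ell N_\ell$). Summing the resulting geometric series gives an explicit finite bound on $E[W_t^\rho]$ whenever the per-stage exponent exceeds the per-stage information bits; a direct calculation shows that this amounts precisely to the hypotheses $\bias, R < R_0$ and $R < (\bias + R_0)/(2\rho)$ for $\rho \in (0,1]$. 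Markov's inequality then delivers~\eqref{eq:Pareto:UB}.

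The main obstacle I anticipate is extending the argument from the general time-varying random ensemble, in which branches at different depths are independent, to the LTI ensemble, in which the generator matrices are shared across time and the codewords associated with distinct incorrect deviations are correlated. The key observation that rescues the analysis is that only the \emph{marginal} distribution of a single nonzero codeword enters the per-path Chernoff bound, and this marginal is uniform on $\mathbb{Z}_2^{\ncc\ell}$ for the uniformly random LTI ensemble exactly as in the time-varying case; this is the mechanism exploited by Gallager in \cite[Sec.~6.9]{GallagerBook1968} and by the LTI ensemble analysis of \cite{SukhavasiHassibi}. Once this single-path marginal is established, the union-bound and Markov steps apply verbatim, yielding the stated Pareto bound simultaneously for both ensembles.
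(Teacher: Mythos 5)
The paper does not actually prove \thmref{thm:SeqDec:complexity:UB}; it imports it wholesale from \cite[Sec.~6.2]{ViterbiOmuraBook} (general random \CCs) and \cite[Sec.~6.9]{GallagerBook1968} (LTI \CCs), so there is no in-paper argument to match against. Your reconstruction follows the classical Savage--Jelinek computational-distribution analysis that those references present, and your key observation for the LTI case --- that after reducing to the all-zero transmitted word only the \emph{marginal} distribution of a single incorrect codeword (on the branches past the first disagreement) enters the per-path Chernoff bound, and that this marginal is the same for the LTI ensemble as for the time-varying one --- is exactly the mechanism the paper itself invokes in its proof sketch of \thmref{lem:LTI:SeqDec} (pairwise independence from the first disagreeing branch plus uniform individual codeword distribution via the affine translation $\bv$ and MBIOS symmetry). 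So the overall architecture is sound and consistent with what the paper relies on.

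There is, however, one step that does not deliver the theorem as stated. Your route bounds $E[W_t^\rho]$ via $\bigl(\sum_i a_i\bigr)^\rho \leq \sum_i a_i^\rho$, which for indicator variables collapses to $E[N_\ell^\rho] \leq E[N_\ell] = \sum_i \Pr(\text{path } i \text{ visited})$. The resulting convergence condition (``per-stage exponent exceeds per-stage information bits'') is then \emph{independent of} $\rho$: you would obtain $\Pr(W_t \geq m) \leq A m^{-\rho}$ only under a single $\rho$-free rate condition (which coincides with $R < (\bias + R_0)/(2\rho)$ at $\rho = 1$ but is strictly more restrictive for $\rho < 1$). The stated hypothesis $R < \frac{\bias + R_0}{2\rho}$ genuinely relaxes as $\rho$ decreases, and that $\rho$-dependence cannot come out of Markov's inequality alone. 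The classical derivation instead bounds the tail directly: $W_t \geq m$ forces the decoder to extend a node at depth at least roughly $\ell_m \approx \log_2 m / (\ncc R)$ of the incorrect subtree, and the union bound is taken only over depths $\geq \ell_m$, so the per-depth exponent $\gamma$ converts into a tail index $\rho$ through the relation between $m$ and $\ell_m$ --- this is where the factor involving $\rho$ in the rate condition originates. You would need to replace the moment-plus-Markov step with this depth-threshold argument (or a $\rho$-dependent refinement of the moment bound) to recover the theorem in the generality claimed; as written, your sketch proves it only on a sub-region of the stated parameter range, albeit one that suffices for the $\rho = 1$ consequence ($E[W_t] < \infty$ for $R < R_0$) that the paper actually uses.
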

An immediate consequence of this result is that the expected complexity per branch $E[W_t]$ is bounded if $R < R_0$.
Moreover, a converse result by Ar{\i}kan~\cite{ArikanSequentialDecoding} states that the expected complexity is unbounded for rates exceeding the cutoff.

The next result provides a lower bound on the error probability.
\begin{thm}[See {\cite[Sec.~6.4]{ViterbiOmuraBook}}]
\label{thm:SeqDec:complexity:LB}
    The probability that $W_t$ of any \CC, where no decoding error occurs, is greater than $m \in \nats$, is lower bounded by 
    \begin{align}
    \label{eq:Pareto:LB}
	\Pr \left( W_t \geq m \right) \geq (1 - o(m)) m^{-\rho} ,
    \end{align}
    where $o(m) \to 0$ for $m \to \infty$, and $R = \frac{E_0(\rho)}{\rho}$ and any $\rho > 0$.\footnote{Recall that $E_0(\rho)/\rho$ is a decreasing function of 
    $\rho$ and therefore $\rho > 1$ implies that $R < R_0$.} 
\end{thm}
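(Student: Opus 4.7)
The plan is to adapt the Jacobs--Berlekamp converse, as presented in Viterbi--Omura Sec.~6.4, by lower-bounding $\Pr(W_t \geq m)$ through the existence of a single ``dominating'' incorrect path that forces the sequential decoder to expand at least $m$ branches. The argument is code-agnostic (it holds for any \CC), so it relies on a sphere-packing-type lower bound on the channel behaviour rather than on any random coding construction.

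First, I would reduce the statement to a counting claim. Focus on the incorrect subtree rooted at the correct node $t$. By definition of the stack/Fano algorithm with Fano metric, a node $v$ at depth $\ell$ in that subtree is expanded whenever the cumulative metric $M$ along the prefix to $v$ dominates the correct path's metric at \emph{every} ancestor depth $1,\dots,\ell$. Condition on no decoding error. Fix $\ell$ by $m = 2^{\ell \ncc R}$, i.e., $\ell = \log_2 m / (\ncc R)$. Then the full depth-$\ell$ incorrect subtree contains exactly $m$ internal branches below node $t$, and the key observation is that even a \emph{single} dominating path at depth $\ell$ already forces the decoder to expand every internal node on and around it; by a straightforward accounting over the paths an exploring algorithm must traverse to ever reach that leaf, this yields $W_t \geq m$ deterministically on the event that such a path exists.

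Second, I would lower-bound the probability of the ``dominating path'' event. Let $\mathbf{c}$ be the transmitted depth-$\ell$ codeword prefix and $\mathbf{c}'$ any specific incorrect prefix that diverges at $t$. A converse to Gallager's random-coding bound, obtained by a Chernoff/saddle-point evaluation of $\Pr\bigl( M(\mathbf{c}') \geq M(\mathbf{c}) \text{ at every prefix}\bigr)$ at the tilt $\rho$, gives that this probability is at least $(1-\eta_1(m))\, 2^{-\ell \ncc E_0(\rho)}$ for some $\eta_1(m) \to 0$. Summing over the $2^{\ell \ncc R}$ candidate incorrect paths and using $R = E_0(\rho)/\rho$ yields an expected count of dominating paths of order $m^{1-\rho}(1+o(1))$, and the corresponding probability of having \emph{at least one} dominating path is at least $(1-\eta_2(m))\, m^{-\rho}$; collecting the slack terms into $o(m)$ (interpreted as a vanishing function) produces the claimed bound $\Pr(W_t \geq m) \geq (1-o(m))\, m^{-\rho}$.

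The main obstacle is the matching lower bound in the second step: Gallager's bound is classically stated as an upper bound, and extracting a matching lower bound requires a careful saddle-point analysis of the tilted measure at the critical $\rho$ satisfying $R = E_0(\rho)/\rho$, together with a local limit theorem to control the polynomial prefactors. The $(1-o(m))$ correction is inherited from the lower-order terms in this saddle-point expansion, and from the gap between ``at least one dominating path'' (which is what the Paley--Zygmund/first-moment-truncation argument gives) and ``expected count of dominating paths''. Because the claim is for \emph{any} \CC\ rather than a random ensemble, no second-moment computation over code randomness is needed, which keeps the argument relatively short; the delicate part is the channel-side tail estimate, not the code-side concentration. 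Once the sphere-packing-type lower bound is in hand, the rest is book-keeping of the scales $m$, $\ell$, and $\rho$ through the identity $R = E_0(\rho)/\rho$.
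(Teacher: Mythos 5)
The paper does not prove this theorem; it cites the Jacobs--Berlekamp computational lower bound as presented in Viterbi--Omura, Sec.~6.4. That argument works by reducing ``few computations'' to \emph{list decoding}: a sequential decoder that extends at most $m$ nodes over the first $\ell$ branches can be converted into a list-of-$m$ decoder for the $2^{\ell \kcc}$ candidate prefixes, so $\Pr(W_t \geq m) + \Pr(\text{error}) \geq P_{\text{list}}(m)$, and the sphere-packing converse for list decoding (which holds for \emph{every} code because it exploits the multiplicity of codewords, not any pairwise structure) yields the Pareto exponent $\rho$ via $R = E_0(\rho)/\rho$ after balancing $\ell$ against $\log m$. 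Your proposal takes a genuinely different route, and two of its steps fail.

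First, the deterministic reduction is wrong. A single incorrect path of depth $\ell$ whose running Fano metric dominates the correct path at every prefix forces the stack (or Fano) decoder to extend only the nodes \emph{along that path} --- about $\ell \approx \log_2 m/(\ncc R)$ extensions --- not the $m = 2^{\ell \ncc R}$ internal nodes of the full depth-$\ell$ incorrect subtree. The decoder descends directly along the best-metric path and never needs to visit the siblings, so the event you construct only certifies $W_t \gtrsim \log m$. To certify $W_t \geq m$ you would need $m$ distinct nodes each with a dominating prefix metric, a far stronger and heavily correlated event; controlling it head-on is exactly what the list-decoding reduction is designed to avoid. Second, the channel-side estimate cannot be code-agnostic in the form you use it. The probability that one \emph{specific} incorrect prefix dominates the transmitted one is governed by the Chernoff/Bhattacharyya exponent of that particular codeword pair, hence by the code's distance structure; for an adversarially chosen (e.g., large-distance) code it can decay much faster than $2^{-\ell \ncc E_0(\rho)}$. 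The exponent $E_0(\rho)$ emerges only after averaging over a random ensemble, or, in the converse direction, from a counting argument showing that $2^{\ell \ncc R}$ codewords cannot all be simultaneously well separated --- i.e., from sphere packing applied to the whole codebook, not to one pair. (Your first-moment step is also not needed as stated: if the per-pair bound held, a single path would already give $\Pr(\exists\,\text{dominating path}) \geq (1-\eta)m^{-\rho}$; the expected-count-to-probability passage you invoke would otherwise require a second-moment argument.) The fix is to abandon the single-dominating-path event and follow the computation-to-list-decoding reduction together with the list-decoding sphere-packing bound.
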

This result was proved to be tight by Savage~\cite{Savage_SequentialDecoding} for \emph{general} random \CC, and is widely believed to be true for random LTI \CCs, although no formal proof exists for the latter.

%%%%%%%%%%%%%%%%%%%%%%%%%%%%%%%%%%%%%%%%%%%%%%%%%%%%%%%%%%%%%%%%%%%%%%%%%%%%%%%%%%%%%%%%%%%%%%%%%%%%%%%%%%%%%%

\section{Linear Time Invariant Anytime Reliable Codes}
\label{ss:LTI}

In this section, we recall the construction of LTI anytime reliable codes as presented in~\cite{SukhavasiHassibi}. A causal linear time-invariant code has a parity-check matrix with the following lower triangular Toeplitz structure. 
 The generator matrix of an $(n, R)$ LTI code of rate $R$ and blocklength $n$ per time step is given by $\bc = \bG_{n,R} \bb$, where
\begin{align}
\label{eqn:toeplitz PC}
    \bG_{n,R} =
    \!
    \begin{bmatrix}
	\bG_1 	& \mathbf{0} 	& \cdots 	& \cdots 	& \cdots \\
	\bG_2 	& \bG_1 	& \mathbf{0} 	& \cdots 	&\cdots \\
	\vdots 	& \vdots 	& \ddots  	& \ddots 	& \cdots \\
	\bG_t 	& \bG_{t - 1} 	& \cdots 	& \bG_1 	& \mathbf{0} \\
	\vdots 	& \vdots 	& \vdots  	&\vdots 	& \ddots
    \end{bmatrix}
    \!\!,
        \bb =
    \! 
        \begin{bmatrix}
            \bb_1
	 \\ \bb_2
	 \\ \vdots
	 \\ \bb_t
	 \\ \vdots
        \end{bmatrix}
    \!\!,
        \bc = 
    \!
        \begin{bmatrix}
            \bc_1
	 \\ \bc_2
	 \\ \vdots
	 \\ \bc_t
	 \\ \vdots
        \end{bmatrix}
        \!\!
\end{align}
and $\bG_t \in \mathbb{Z}_2^{n \times k}$. The following definition, 
given in~\cite{SukhavasiHassibi}, dictates the way to choose the $\bG_i$'s.
\begin{defn}[LTI Ensemble]
\label{def:LTI_ensemble}
    Fix $\bG_1$ to be a full rank matrix, and generate the entries of $\bG_t$ independently and uniformly at random, for $t \geq 2$.
\end{defn}

For the purpose of this paper, we shall view an $(n, R)$ LTI code as a convolutional code with infinite delay-line length, $\kcc$ information bits and $\ncc$ code bits. As a result of this interpretation, the results of Section \ref{s:CC} apply directly to this Toeplitz ensemble.

We shall now show why such a construction does indeed guarantee anytime reliability as defined in \eqref{eq:anytime_reliability}. 

By using the Markov inequality along with the result of \thmref{thm:CC:ML:BER}, 
the probability that a particular code from this ensemble has an exponent that is strictly smaller than $(E_G(R) - \epsilon)$ is
bounded from the above by 
\begin{align}
\label{eq:UB:specific_d}
    P\left( P_e(d) > 2^{-(E_G(R) - \epsilon) \ncc \conlen}\right) \leq 2^{-\eps \ncc \conlen} .
\end{align}
Thus, for any $\epsilon > 0$, this probability can be made arbitrarily small by taking $\conlen$ to be large enough.

However, for a code to be anytime reliable it needs to satisfy \eqref{eq:anytime_reliability} 
for every $t$ and $\conlen_0 \leq \conlen \leq t$.
Unfortunately, applying the union bound and \eqref{eq:UB:specific_d} to 
\begin{align}
    P\left( \bigcup_{t = 1}^{\infty} \bigcup_{\conlen = \conlen_0}^{t} \left\{ P_e(t,\conlen) > 2^{-(E_G(R) - \epsilon) \ncc \conlen}\right\} \right) 
\end{align}
gives a trivial upper bound.

The advantage of using an LTI code is that for a fixed $d$, the event $\left\lbrace P_e(t,d) > 2^{-(E_G(R) - \epsilon)nd} \right\rbrace$ is identical for all $t$.
Therefore, for LTI codes, we have
\begin{subequations}
\label{eq:LTI:UB}
\noeqref{eq:LTI:UB:lhs,eq:LTI:UB:1UB,eq:LTI:UB:series,eq:LTI:UB:final}
\begin{align}
\label{eq:LTI:UB:lhs}
    &P\left( \bigcup_{t = 1}^{\infty}\bigcup_{\conlen = \conlen_0}^{t} \left\lbrace P_e(t,\conlen) > 2^{-(E_G(R) - \epsilon) \ncc \conlen}\right\rbrace \right)\\
\label{eq:LTI:UB:1UB}
    &= P\left(\bigcup_{\conlen = \conlen_0}^{\infty} \left\lbrace P_e(t,d) > 2^{-(E_G(R) - \epsilon) \ncc \conlen}\right\rbrace \right)\\
\label{eq:LTI:UB:series}
    & \leq \sum_{\conlen = \conlen_0}^{\infty} 2^{-\eps nd}\\
\label{eq:LTI:UB:final}
    & = \frac{2^{-\eps \ncc \conlen_0}}{1 - 2^{-\eps \ncc}}
    \,.
\end{align}
\end{subequations}
As a result, a large enough $\conlen_0$ guarantees that a specific code selected at random for the LTI ensemble 
achieves \eqref{eq:anytime_reliability} with exponent $\beta = (E_G(R) - \epsilon)$, for all $t$ and $\conlen_0 \leq \conlen \leq t$, 
with high probability.

%%%%%%%%%%%%%%%%%%%%%%%%%%%%%%%%%%%%%%%%%%%%%%%%%%%%%%%%%%%%%%%%%%%%%%%%%%%%%%%%%%%%%%%%%%%%%%%%%%%%%%%%%%%%%%

\section{Sequential Decoding of Linear Time-Invariant Anytime Reliable Codes}
\label{s:LTI}

In this section we show that the upper bound on the probability of the first error event under sequential decoding for \CCs~\eqref{eq:SeqDec:Pe}, 
holds true also for LTI \CCs, which for $\framelen = \ncc \conlen$, 
identifies with the LTI tree codes of~\eqref{eqn:toeplitz PC}.

To prove this, we adopt the proof technique of \cite[Sec.~6.2]{GallagerBook1968}, where the exponential bounds on the error probability of random block codes are shown to hold also for linear random blocks codes.

\begin{thm}
\label{lem:LTI:SeqDec}
    The probability of the first error event of the LTI random tree ensemble of \defnref{def:LTI_ensemble}, 
    using the Fano or stack sequential decoders and the Fano metric with bias $\bias$, 
    is bounded from the above by \eqref{eq:SeqDec:Pe}.
\end{thm}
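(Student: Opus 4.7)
The plan is to follow the argument of~\cite[Ch.~10]{JelinekBook} that establishes \thmref{thm:Jelinek:RandomCC} for general random tree codes, and to adapt its pairwise error step along the lines of the classical Gallager argument for linear random block codes~\cite[Sec.~6.2]{GallagerBook1968}. Jelinek's proof bounds the probability of the first error event by a union bound over all incorrect paths, indexed by their divergence time $\tau$ from the correct path and by the associated nonzero message difference $\Delta\bb$; the remainder of the argument is a combinatorial/Chernoff-type analysis of the Fano-metric comparison that touches the random ensemble only through the ensemble-averaged pairwise error probability of a single incorrect path. It therefore suffices to verify that, for each fixed $\Delta\bb$ with $\Delta\bb_\tau \neq \mathbf{0}$ and $\Delta\bb_i = \mathbf{0}$ for $i < \tau$, the induced difference codeword $\Delta\bc$ behaves, under the LTI ensemble of \defnref{def:LTI_ensemble}, in essentially the same way as under the general random tree-code ensemble.

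By linearity and MBIOS symmetry one may assume without loss of generality that the correct path is the all-zero sequence, so that each incorrect codeword coincides with the corresponding $\Delta\bc$. From the Toeplitz structure~\eqref{eqn:toeplitz PC}, I would establish by induction on $j$ the following distributional claim: conditional on $\bG_1, \ldots, \bG_j$, the block $\Delta\bc_{\tau+j}$ is (i) the deterministic nonzero vector $\bG_1 \Delta\bb_\tau$ when $j = 0$, and (ii) uniform on $\mathbb{Z}_2^n$ when $j \geq 1$. Claim~(ii) follows because, by the recursion $\Delta\bc_{\tau+j} = \sum_{i=0}^{j} \bG_{i+1}\, \Delta\bb_{\tau+j-i}$, the block $\Delta\bc_{\tau+j}$ contains the summand $\bG_{j+1}\Delta\bb_\tau$, where $\bG_{j+1}$ is a fresh i.i.d.\ uniform matrix that is independent of $\bG_1, \ldots, \bG_j$; since $\Delta\bb_\tau \neq \mathbf{0}$, the product $\bG_{j+1}\Delta\bb_\tau$ is uniform on $\mathbb{Z}_2^n$, and absorbing the remaining summands (which are measurable with respect to the conditioning) preserves uniformity. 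Hence, for $j \geq 1$, the blocks $\Delta\bc_{\tau+j}$ are mutually independent and uniform on $\mathbb{Z}_2^n$, exactly matching the marginal statistics of the general random tree-code ensemble.

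With this distributional property in place, the Chernoff step at the heart of Jelinek's proof carries through unchanged on all blocks with $j \geq 1$: the per-block contribution collapses to Gallager's $E_0(\rho)$ factor, and the subsequent geometric-series and union-bound arguments over the divergence time $\tau$ and the $2^{k\ell}$ candidate paths of length $\ell$ yield the exponent $E_J(\bias, R)$ of~\eqref{eq:SeqDec:Pe}. The single deterministic block $\Delta\bc_\tau$ contributes a per-channel-use Bhattacharyya-type factor that is at most unity and therefore can be absorbed into the multiplicative constant $A$ in~\eqref{eq:SeqDec:A}. The rest of the analysis is purely combinatorial and transfers verbatim from the general random ensemble to the LTI ensemble, giving the same upper bound~\eqref{eq:SeqDec:Pe} with the same exponent.

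The principal obstacle is the inductive uniformity claim~(ii): although each $\bG_i$ with $i \geq 2$ is individually uniform, the Toeplitz structure causes the same $\bG_i$ to appear in many blocks of $\Delta\bc$, inducing cross-block correlations that are absent in the general random ensemble. What saves the argument is that in each new block $\Delta\bc_{\tau+j}$ there is a contribution $\bG_{j+1}\Delta\bb_\tau$ from a matrix that has not yet been used, is independent of $\bG_1, \ldots, \bG_j$, and~--- because $\Delta\bb_\tau \neq \mathbf{0}$~--- is itself marginally uniform on $\mathbb{Z}_2^n$; adding this fresh uniform summand to anything measurable with respect to the earlier randomness preserves uniformity. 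This is the precise analog, in the LTI/Toeplitz tree-code setting, of Gallager's observation that random linear block codes inherit the error exponent of the general random ensemble over MBIOS channels.
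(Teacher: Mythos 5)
Your proposal is correct and rests on the same two pillars as the paper's proof: (i) Jelinek's argument only touches the ensemble through pairwise path statistics plus the marginal codeword distribution, and (ii) the Toeplitz structure supplies the needed randomness because each new block past the divergence point contains a fresh summand $\bG_{j+1}\Delta\bb_\tau$ with $\Delta\bb_\tau \neq \mathbf{0}$, which is uniform and independent of everything used so far. Where you differ is in how the ``individual codeword distribution'' requirement is met. The paper works with an auxiliary \emph{affine} ensemble $\bc = \bG\bb + \bv$ with a random uniform translate $\bv$ (Gallager's coset trick from \cite[Sec.~6.2]{GallagerBook1968}), so that every individual codeword is i.i.d.\ uniform and Jelinek's proof applies verbatim; MBIOS symmetry is then invoked once, at the end, to strip off $\bv$. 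You instead invoke linearity and MBIOS symmetry up front to reduce to the all-zero transmitted codeword and then analyze the difference codewords directly, showing by your conditional-uniformity induction that the post-divergence blocks are i.i.d.\ uniform. Both routes are sound; for an MBIOS channel your per-block average does collapse to the same $E_0(\rho)$ as the uniform-input random ensemble, and the correct-path metric statistics are likewise unaffected by the all-zero reduction. Your treatment of the first divergent block is actually the more careful of the two: since $\bG_1$ is fixed in \defnref{def:LTI_ensemble}, the block $\Delta\bc_\tau = \bG_1\Delta\bb_\tau$ is deterministic, so the two paths are \emph{not} independent at that branch (the paper's sketch asserts pairwise independence ``starting from the first branch that disagrees,'' which strictly holds only from the second divergent branch onward); you pay for this honesty with a multiplicative constant larger than the $A$ of \eqref{eq:SeqDec:A} by a factor exponential in $\ncc$ but independent of $\conlen$, which leaves the exponent $E_J(\bias,R)$ of \eqref{eq:SeqDec:Pe} intact.
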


\begin{proof}[Proof sketch]
    A thorough inspection of the proof of \thmref{thm:Jelinek:RandomCC}, as it appears in \cite[Ch.~10]{JelinekBook}, 
    reveals that the following two requirements for this bound to be valid are needed:
    \begin{enumerate}
    \item \textbf{Pairwise independence.}
	Every two paths are independent starting from the first branch that corresponds to source branches that disagree.
    \item \textbf{Individual codeword distribution.}
	The entries of each codeword are i.i.d.\ and uniform.
    \end{enumerate}
    We next show how these two requirements are met for the affine linear ensemble.
    The codes in this ensemble are as in \eqref{eqn:toeplitz PC} up to an additive translation 
    $\bv = \begin{bmatrix} \bv_1 & \bv_2 & \cdots & \bv_t \cdots \end{bmatrix}^T$, 
    where $\bv_t \in \ints_2^\ncc$, with $\bc = \bG \bb + \bv$.
    The entries of $\bG$ and $\bv$ are sampled independently and uniformly at random.

    Now, assume that two source words $\bb$ and $\tbb$ are identical for $i < t$ and differ in at least one bit in branch $t$, \ie, $\bb_i = \tbb_i$ for $i < t$ and $\bb_t \neq \tbb_t$.
    Then, the causal structure of $\bG$ guarantees that also $\bc_i = \tbc_i$ for $i < t$.
    Moreover, $\bb_t \neq \tbb_t$ along with the random construction of $\bG$ suggest that the two code paths starting from branch $t$, 
    $\begin{bmatrix}  \bc_t^T &  \bc_{t+1}^T & \cdots \end{bmatrix}^T$ and 
    $\begin{bmatrix} \tbc_t^T & \tbc_{t+1}^T & \cdots \end{bmatrix}^T$
    are independent. This establishes the first requirement.
    
    To establish the second requirement we note that the addition of a random uniform translation vector $\bv$
    guarantees that the entries of each codeword are i.i.d.\ and uniform. This establishes the second requirement and hence also the validity of the proof of \cite[Ch.~10]{JelinekBook}.
    
    Finally, note that since the channel is MBIOS, the same error probability is achieved for any translation vector $\bv$.
\end{proof}
Since 
\thmref{lem:LTI:SeqDec}
holds for LTI codes, 
a specific code chosen from the LTI ensemble is anytime reliable with \eqref{eq:SeqDec:Pe}, with high probability, following \eqref{eq:LTI:UB}.

%%%%%%%%%%%%%%%%%%%%%%%%%%%%%%%%%%%%%%%%%%%%%%%%%%%%%%%%%%%%%%%%%%%%%%%%%%%%%%%%%%%%%%%%%%%%%%%%%%%%%%%%%%%%%%

\section{Simulation of a Control System}
\label{s:numerics}

To demonstrate the effectiveness of the sequential decoder in stabilizing an unstable plant driven by bounded noise, 
we simulate a cart--stick balancer controlled by an actuator that obtains noisy measurements of the state of the cart through a BSC.
The example is the same one from~\cite{SukhavasiHassibi} which is originally from~\cite{FranklinPowellEmamiBook}.
The plant dynamics evolve as,
\begin{align}
    \bx_{t+1} &= \mathbf{A}\bx_t + \mathbf{B}\bu_t + \bw_t \label{eqn:plant_state}
    \\ y_t &= \mathbf{C}\bx_t + v_t ,
\end{align}
where $\bu_t$ is the control input signal that depends only on the estimate of the current state, \ie\ $\bu_t = \bK \hat{\bx}_{t|t}$. 
The system noise $\bw_t$ is a vector of i.i.d. Gaussian random variables with mean $\mu = 0$ and variance $\sigma^2 = 0.01$, truncated to $[-0.025, 0.025]$. The measurement noise $v_t$ is also a truncated Gaussian random variable of the same parameters. We assume that the system is in observer canonical form:
\begin{align}
&\mathbf{A} = \begin{bmatrix}
\phantom{-} 3.3010   &  1 & 0 \\
-3.2750  &  0 & 1 \\
\phantom{-} 0.9801  &  0 & 0 
\end{bmatrix}, \mathbf{B} = \begin{bmatrix}
-0.0300\\
-0.0072\\
\phantom{-} 0.0376
\end{bmatrix},\mathbf{C} = \begin{bmatrix}
1 & 0 & 0
\end{bmatrix}, \\[-.75\baselineskip]
% \vspace{-2cm}
&\mathbf{K} = \begin{bmatrix}
-55.6920  & -32.3333 & -19.0476
\end{bmatrix}
.
\end{align}

The state $\bx_t$, before the transformation to observer canonical form, is composed of the stick's angle, the stick's angular velocity and the cart's velocity. The system is unstable with the largest eigenvalue of $\mathbf{A}$ being 1.75. The channel between the observer and the controller is a BSC with bit-flip probability $p = 0.01$, which has a cutoff rate $R_0 = 0.7382$. We fix a budget of $n = 20$ channel uses. Using Theorem 8.1 in~\cite{SukhavasiHassibi}, the minimum required number of quantization bits is $\kcc_{\text{min}} = 3$ and the minimum required exponent is $\beta_{\min} = 0.2052$. 

For the first experiment, we use a code of rate $R = 1/2$, with $k = 10$ bits for a lattice quantizer with bin width $\delta = 0.1$. From \eqref{eq:SeqDec:Pe:bound}, a sequential decoder with bias $R_0$ will guarantee an error exponent of $\beta = 0.2382$. As is evident from the dark curve in \figref{fig:angle}, the stick on the cart does not deviate by more than 12 degrees.

For the second experiment, we use a lower code rate $R = 1/5$, which provides $\beta = 0.5382$. The light curve in \figref{fig:angle} shows that the deviation is reduced to 4 degrees.

\begin{figure}[t]%[h!]
\centering
\includegraphics[scale=0.4]{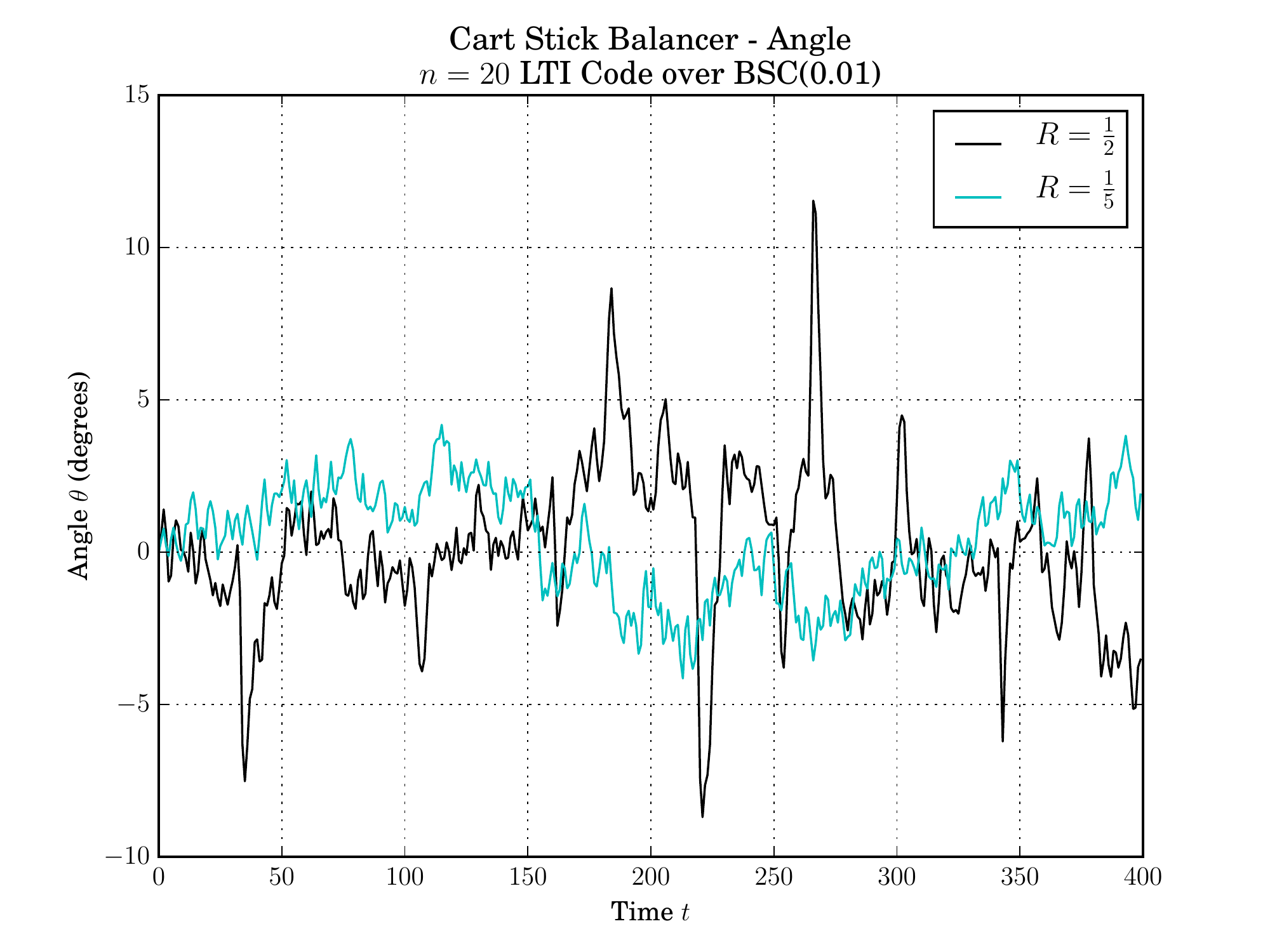}
\caption{Stability of the Cart--Stick Balancer is demonstrated using a sequentially decoded LTI code.}
\label{fig:angle}
\end{figure}
Although these simulations might suggest that a lower rate code always results in better stability of the dynamical system, it is not \emph{a-priori} clear that this is truly the case. A lower code rate uses a coarser quantizer than a higher rate code. As a result, there could be some loss due to this coarseness. A common metric used to quantify the performance of the closed-loop stability of a dynamical system is the linear quadratic regulator (LQR) cost for a finite time horizon $T$ given by 
\begin{equation}
J = \mathbb{E} \left[\frac{1}{2T} \sum_{t = 1}^{T}\left( \left\lVert \bx_t \right\rVert^2 + \left\lVert \bu_t \right\rVert^2 \right) \right] ,
\end{equation}
where the expectation is w.r.t.\ the randomness of the plant and the channel. For our example, we simulated the system using three different quantization levels, with 100 codes per quantization level and 40 experiments per code. 
The data is tabulated in Table \ref{tbl:LQG}.

\begin{table}[h!]
\caption{Average LQR Cost over 100 codes with 40 experiments per code}
\centering
\begin{tabular}{|c|c|}
\hline
$k$ & LQR Cost\\  \hline
4  & 206.0 \\ 
5  & 86.4 \\ 
10 & 873.0\\ \hline
\end{tabular}
\label{tbl:LQG}
\end{table}
In principle, one would randomly sample an LTI generator matrix where each subblock $\bG_i \in \mathbb{Z}_2^{n \times k}$. Nonetheless, from an implementation efficiency point of view, there is no loss in the anytime exponent if we pick $\bG_i \in \mathbb{Z}_2^{n' \times k'}$, where $n = n'\mathsf{gcd}(n,k)$ and $k = k'\mathsf{gcd}(n,k)$, and then using the first $t\mathsf{gcd}(n,k)$ blocks to encode $\{\bb_i\}_{i = 1}^t$.

%%%%%%%%%%%%%%%%%%%%%%%%%%%%%%%%%%%%%%%%%%%%%%%%%%%%%%%%%%%%%%%%%%%%%%%%%%%%%%%%%%%%%%%%%%%%%%%%%%%%%%%%%%%%%%

\section{Discussion}
\label{s:summary}

We showed that sequential decoding algorithms have several desired features: 
Error exponential decay, memory that grows linearly (in contrast to the exponential growth under ML decoding) 
and expected complexity per branch that grows linearly (similarly to the encoding process of LTI tree codes).
However, the complexity distribution is heavy tailed (recall Theorems~\ref{thm:SeqDec:complexity:UB} and \ref{thm:SeqDec:complexity:LB}).
This means that there is a substantial probability that the computational complexity is going to be very large, 
which will cause, in turn, a failure in stabilizing the system. Specifically, by allowing only a finite backtracking length to the past, 
the computational complexity can be bounded at the expense of introducing an error due to failure.

From a practical point of view, the control specifications of the problem determine a probability of error threshold under which a branch $\bc_t$ is considered to be reliable. This can be used to set a limit on the delay-line length of the tree code, which in turn converts it to a \CC\ with a finite delay-line length. 

Finally, note that tighter bounds can be derived below the cutoff rate via expurgation. Moreover, 
random linear block codes are known to achieve the expurgated bound for block codes (with no need in expurgation)~\cite{BargForney}. 
Indeed, using this technique better bounds under ML decoding were derived in \cite{SukhavasiHassibi} 
and a similar improvement seems plausible for sequential decoding.

%%%%%%%%%%%%%%%%%%%%%%%%%%%%%%%%%%%%%%%%%%%%%%%%%%%%%%%%%%%%%%%%%%%%%%%%%%%%%%%%%%%%%%%%%%%%%%%%%%%%%%%%%%%%%%

\section{Acknowledgments}

We thank R.~T.~Sukhavasi for many helpful discussions and the anonymous reviewers~--- for valuable comments.

%%%%%%%%%%%%%%%%%%%%%%%%%%%%%%%%%%%%%%%%%%%%%%%%%%%%%%%%%%%%%%%%%%%%%%%%%%%%%%%%%%%%%%%%%%%%%%%%%%%%%%%%%%%%%%

\bibliographystyle{IEEEtran}

\begin{thebibliography}{10}
\providecommand{\url}[1]{#1}
\csname url@samestyle\endcsname
\providecommand{\newblock}{\relax}
\providecommand{\bibinfo}[2]{#2}
\providecommand{\BIBentrySTDinterwordspacing}{\spaceskip=0pt\relax}
\providecommand{\BIBentryALTinterwordstretchfactor}{4}
\providecommand{\BIBentryALTinterwordspacing}{\spaceskip=\fontdimen2\font plus
\BIBentryALTinterwordstretchfactor\fontdimen3\font minus
  \fontdimen4\font\relax}
\providecommand{\BIBforeignlanguage}[2]{{%
\expandafter\ifx\csname l@#1\endcsname\relax
\typeout{** WARNING: IEEEtran.bst: No hyphenation pattern has been}%
\typeout{** loaded for the language `#1'. Using the pattern for}%
\typeout{** the default language instead.}%
\else
\language=\csname l@#1\endcsname
\fi
#2}}
\providecommand{\BIBdecl}{\relax}
\BIBdecl

\bibitem{Babook:Indefinite}
B.~Hassibi, A.~H. Sayed, and T.~Kailath, \emph{Indefinite-Quadraric Estimation
  and Control: A Unified Approach to H2 and H-infinity Theories}.\hskip 1em
  plus 0.5em minus 0.4em\relax New York: {SIAM} Studies in Applied Mathematics,
  1998.

\bibitem{SahaiMitterPartI}
A.~Sahai and S.~K. Mitter, ``The necessity and sufficiency of anytime capacity
  for stabilization of a linear system over a noisy communication link---part
  {I}: Scalar systems,'' \emph{IEEE Trans.\ Inf.\ Theory}, vol.~52, no.~8, pp.
  3369--3395, Aug. 2006.

\bibitem{TreeCodes}
L.~J. Schulman, ``Coding for interactive communication,'' \emph{IEEE Trans.\
  Inf.\ Theory}, vol.~42, pp. 1745--1756, 1996.

\bibitem{SukhavasiHassibi}
R.~T. Sukhavasi and B.~Hassibi, ``Error correcting codes for distributed
  control,'' \emph{IEEE Trans. Auto.\ Control}, accepted, Jan.~2016.

\bibitem{WozencraftPhD}
J.~M. Wozencraft, ``Sequential decoding for reliable communications,'' Res.\
  Lab.\ of Electron., Massachusetts Institute of Technology, Cambridge, MA,
  USA, Tech. Rep., 1957.

\bibitem{JelinekBook}
F.~Jelinek, \emph{Probabilistic Information Theory: Discrete and Memoryless
  Models}.\hskip 1em plus 0.5em minus 0.4em\relax New York: McGraw-Hill, 1968.

\bibitem{GallagerBook1968}
R.~G. Gallager, \emph{Information Theory and Reliable Communication}.\hskip 1em
  plus 0.5em minus 0.4em\relax New York: John Wiley \& Sons, 1968.

\bibitem{ViterbiOmuraBook}
A.~J. Viterbi and J.~K. Omura, \emph{Principles of Digital Communication and
  Coding}.\hskip 1em plus 0.5em minus 0.4em\relax New York: McGraw-Hill, 1979.

\bibitem{ZigangirovBook}
R.~Johannesson and K.~S. Zigangirov, \emph{Fundamentals of Convolutional
  Coding}.\hskip 1em plus 0.5em minus 0.4em\relax New York: Wiley-IEEE Press,
  1999.

\bibitem{Sahai_SequentialDecoding}
A.~Sahai and H.~Palaiyanur, ``A simple encoding and decoding strategy for
  stabilization discrete memoryless channels,'' in \emph{Proc.\ Annual Allerton
  Conf.\ on Comm., Control, and Comput.}, Monticello, IL, USA, Sep.\ 2005.

\bibitem{Jelink_SequentialDecoding:Parameters}
F.~Jelinek, ``Upper bounds on sequential decoding performance parameters,''
  \emph{IEEE Trans.\ Inf.\ Theory}, vol.~20, no.~2, pp. 227--239, Mar. 1974.

\bibitem{ArikanSequentialDecoding}
E.~Ar{\i}kan, ``An upper bound on the cutoff rate of sequential decoding,''
  \emph{IEEE Trans.\ Inf.\ Theory}, vol.~34, no.~1, pp. 55--63, Jan. 1988.

\bibitem{TimeInvariantConvolutional}
N.~Shulman and M.~Feder, ``Improved error exponent for time-invariant and
  periodically time-variant convolutional codes,'' \emph{IEEE Trans.\ Inf.\
  Theory}, vol.~46, pp. 97--103, 2000.

\bibitem{BCJR}
L.~Bahl, J.~Cocke, F.~Jelinek, and J.~Raviv, ``Optimal decoding of linear codes
  for minimizing symbol error rate,'' \emph{IEEE Trans.\ Inf.\ Theory}, vol.
  IT-20, pp. 284--287, Mar. 1974.

\bibitem{Savage_SequentialDecoding}
J.~E. Savage, ``Sequential decoding~--- the computation problem,'' \emph{Bell
  Sys. Tech. Jour.}, vol.~45, no.~1, pp. 149--175, Jan. 1966.

\bibitem{FranklinPowellEmamiBook}
G.~Franklin, J.~D. Powell, and A.~{Emami-Naeini}.\hskip 1em plus 0.5em minus
  0.4em\relax New Jersey: Pearson Prentich Hall, 2006.

\bibitem{BargForney}
A.~Barg and {G.~D.~Forney~Jr.}, ``Random codes: Minimum distances and error
  exponents,'' \emph{IEEE Trans.\ Inf.\ Theory}, vol.~48, no.~9, pp.
  2568--2573, Sep., 2002.

\end{thebibliography}
% Generated by IEEEtran.bst, version: 1.12 (2007/01/11)

%%%%%%%%%%%%%%%%%%%%%%%%%%%%%%%%%%%%%%%%%%%%%%%%%%%%%%%%%%%%%%%%%%%%%%%%%%%%%%%%%%%%%%%%%%%%%%%%%%%%%%%%%%%%%%%%%%%%%%%%%
\end{document}